\titlespacing*{\section}{0pt}{1em}{0em}
\definecolor{darkgray}{rgb}{0.25,0.25,0.25}
\definecolor{darkred}{rgb}{0.89,0.10,0.11}
\definecolor{darkblue}{rgb}{0.12,0.39,0.62}
\newcommand{\mytoprule}{\specialrule{0.1em}{0em}{0em}}
\newcommand{\mybottomrule}{\specialrule{0.1em}{0em}{0em}} 
\newcommand{\mymidrule}{\specialrule{0.05em}{0em}{0em}}
\newsavebox\cellbox
\newcolumntype{W}[2]
{>{\begin{lrbox}\cellbox}%
l%
<{\end{lrbox}%
\makebox[#2][#1]{\usebox\cellbox}}}
\newenvironment{proposition}[2][Proposition]{\begin{trivlist}
\item[\hskip \labelsep {\bfseries #1}{\bfseries #2.}]}{\end{trivlist}}
\begin{document}
\makeatletter
\renewcommand\@biblabel[1]{#1.}
\makeatother
	
\renewcommand{\figurename}{Figure}
\renewcommand{\thefigure}{\arabic{figure}}
\renewcommand{\tablename}{Table}
\renewcommand{\thetable}{\arabic{table}}
\renewcommand{\refname}{\large References}

\addtolength{\textheight}{1cm}
\addtolength{\textwidth}{1cm}
\addtolength{\hoffset}{-0.5cm}

\setlength{\belowcaptionskip}{1ex}
\setlength{\textfloatsep}{2ex}
\setlength{\dbltextfloatsep}{2ex}

\hyphenation{page-rank}

\newcommand*{\citen}[1]{%
  \begingroup
    \romannumeral-`\x 
    \setcitestyle{numbers}%
    \cite{#1}%
  \endgroup   
}

\newcommand{\enter}{\curvearrowleft}
\newcommand{\exit}{\curvearrowright}
\newcommand{\intra}{\circlearrowright}

\title{Mapping flows on sparse networks with missing links}

\author{Jelena Smiljani\'c}
\email{jelena.smiljanic@umu.se} 
 \affiliation{Integrated Science Lab, Department of Physics, Ume{\aa} University, SE-901 87 Ume{\aa}, Sweden} 
 \affiliation{Scientific Computing Laboratory, Center for the Study of Complex Systems, Institute of Physics Belgrade, University of Belgrade, Pregrevica 118, 11080 Belgrade, Serbia.}
\author{Daniel Edler}
 \affiliation{Integrated Science Lab, Department of Physics, Ume{\aa} University, SE-901 87 Ume{\aa}, Sweden}
 \affiliation{Gothenburg Global Biodiversity Centre, Box 461, SE-405 30 Gothenburg, Sweden.}
 \affiliation{Department of Biological and Environmental Sciences, University of Gothenburg, Carl Skottsbergs gata 22B, Gothenburg 41319, Sweden.}
\author{Martin Rosvall}
\affiliation{Integrated Science Lab, Department of Physics, Ume{\aa} University, SE-901 87 Ume{\aa}, Sweden}


\begin{abstract}

Unreliable network data can cause community-detection methods to overfit and highlight spurious structures with misleading information about the organization and function of complex systems. Here we show how to detect significant flow-based communities in sparse networks with missing links using the map equation. Since the map equation builds on Shannon entropy estimation, it assumes complete data such that analyzing undersampled networks can lead to overfitting. To overcome this problem, we incorporate a Bayesian approach with assumptions about network uncertainties into the map equation framework. Results in both synthetic and real-world networks show that the Bayesian estimate of the map equation provides a principled approach to revealing significant structures in undersampled networks.

\end{abstract}

\maketitle

\section{Introduction}

Unraveling the modular organization of social and biological systems with interactions comprising measured movements of some entity such as people, money, or information requires reliable maps of network flows~\cite{pons2005flows, rosvall2008maps, delvenne2010flows, schaub2012flows, lambiotte2014flows}. To find modular regularities in network flows, the map equation estimates a modular description length of the flows with information-theoretic measures. Optimizing the map equation with the search algorithm Infomap maximally compresses the modular description and detects significant flow-based communities when enough links are observed~\cite{rosvall2008maps, edler2017mapequation}. However, if too many links are missing, the map equation may highlight spurious communities resulting from mere noise. While there are generative methods that can deal with uncertain network structures, including link-prediction algorithms~\cite{guimera2009prediction, lu2011prediction, ghasemian2019overfitting} and network reconstruction approaches that often build on the stochastic block model~\cite{martin2016reconstruction, newman2018reconstruction, newman2018reconstruction2, peixoto2018reconstruction, squartini2018reconstruction}, no method can reliably identify flow-based communities in networks with missing links.

The map equation estimates the modular description length of network flows with the Shannon entropy~\cite{shannon1948mathematical}. With missing data, the Shannon entropy underestimates the actual entropy of the complete data~\cite{basharin1959entropy}. Consequently, when a network has many missing links, the map equation underestimates the actual description length of the complete network, capitalizes on details in the observed network, and favors network partitions with many small communities. While higher model complexity can further compress the description length, the resulting communities become sensitive to network perturbations. Having more missing links further obscures the community structure and leads to higher sensitivity. Overfitting happens when the communities poorly compress the description length of the complete network or other samples of the complete network~\cite{peixoto2014hierarchical,valles2018consistencies}.

Underestimating the entropy in networks with missing links also causes problems for standard procedures that evaluate model-prediction performance, including cross-validation: When the modular description length depends on the number of observed links, it also depends on the number of cross-validation folds such that only balanced but wasteful equal-sized splits of a network into training and test networks give useful results.

To overcome these problems, we present two regularization methods based on entropy estimation for undersampled discrete data. First, we incorporate a Bayesian approach in the map equation framework~\cite{wolpert1995bayes} and derive a closed-form formula for the posterior mean of the map equation under the Dirichlet prior distribution of network flows. Second, to enable more effective cross-validation, we measure the modular description length of the training and test networks for a given partition using Grassberger entropy estimation~\cite{grassberger2008entropy}. 

We show that the Bayesian estimate of the map equation does not detect spurious communities in the undersampled regime in either synthetic or real-world networks. Also, compared with the degree-corrected stochastic block model~\cite{peixoto2017dcsbm, peixoto2020mergesplit}, this approach gives solutions that are more robust to missing links in the analyzed networks. Moreover, with Grassberger entropy estimation, the modular description length becomes nearly independent of the amount of data: Instead of wasteful equal-sized splits, we can use most links in the training network to detect communities with Infomap and validate them using the remaining links in the test network. These two complementary solutions help us reduce overfitting and allow us to detect significant flow-based communities in networks with missing links.

\section{Mapping flows on complete networks}

The map equation is an information-theoretic objective function for community detection based on the equivalence between data compression and identifying regularities in data. Building on this minimum description length principle, the map equation estimates the per-step theoretical lower limit of the average code word length needed to describe network flows with a modular description~\cite{rosvall2008maps, edler2017mapequation}. When the links themselves do not represent flows, we can model the network flows with a random walker traversing the network. The goal is to identify the network partition that maximally compresses the modular description, which, at the same time, best captures the modular regularities of the network flows.

For simplicity, here we consider modular descriptions with a two-level community hierarchy (for the multilevel map equation, see Appendix~\ref{appendixB}). In a network with a well-defined community structure, the network flows stay for a relatively long time within communities. Therefore, to encode movements of the random walker between nodes with better compression, the map equation reuses short code words in modular codebooks instead of using unique code words for each node. For a uniquely decodable description, this approach requires an additional index codebook to encode transitions between communities.

The map equation measures the theoretical lower limit of the code length using the Shannon entropy \cite{shannon1948mathematical}. For partition $\mathsf{M}$ of nodes $\alpha=1 \ldots V$ in communities $i=1 \ldots m$, the map equation takes as input the probability that the random walker enters community $i$, $q_{i\enter}$, the probability to visit node $\alpha$, $p_{\alpha}$, and the probability to exit community $i$, $q_{i\exit}$. With $p_{i}^{\intra}=q_{i\exit}+\sum_{\alpha \in i}p_\alpha$ for the total use rate of module codebook $i$, the average per-step code length needed to describe random walker movements within community $i$ is
\begin{align}\label{eq:modulecodelength}
H(\mathcal{P}_{i}) = - \frac{q_{i\exit}}{p_{i}^{\intra}} \log_2\frac{q_{i\exit}}{p_{i}^{\intra}} 
 -\sum_{\alpha \in i} \frac{p_{\alpha}}{p_{i}^{\intra}}\log_2\frac{p_{\alpha}}{p_{i}^{\intra}}.
\end{align}
Similarly, the average per-step code length needed to describe random walker transitions between communities is
\begin{align}\label{eq:indexcodelength}
H(\mathcal{Q}) = - \sum_{i=1}^{m}\frac{q_{i\enter}}{q_{\enter}} \log_2\frac{q_{i\enter}}{q_{\enter}},
\end{align}
where $q_{\enter} = \sum_{i=1}^{m} q_{i\enter}$ is the total use rate of the index codebook.
Therefore, we can express the map equation as the sum of the average code length of all codebooks weighted by their use rate:
\begin{align}\label{eq:mapeq}
L(\mathsf{M}) = q_{\enter} H(\mathcal{Q}) + \sum_{i=1}^{m}p_{i}^{\intra}H(\mathcal{P}_i). 
\end{align}
To identify the partition that minimizes the map equation, Infomap explores the space of possible solutions in a stochastic and greedy fashion.

\section{Mapping flows on sparse networks with missing links}

Combined with Infomap, the map equation is an accurate method for community detection when complete network data are available~\cite{lancichinetti2009comparison}. However, empirical network data can lack data or contain measurement errors that cause missing or spurious links. When the map equation is applied to such unreliable network data, it may identify spurious communities with misleading information about the underlying network structure and function (Fig.~\ref{fig:network}).

\begin{figure}
 \includegraphics[width=8.6cm]{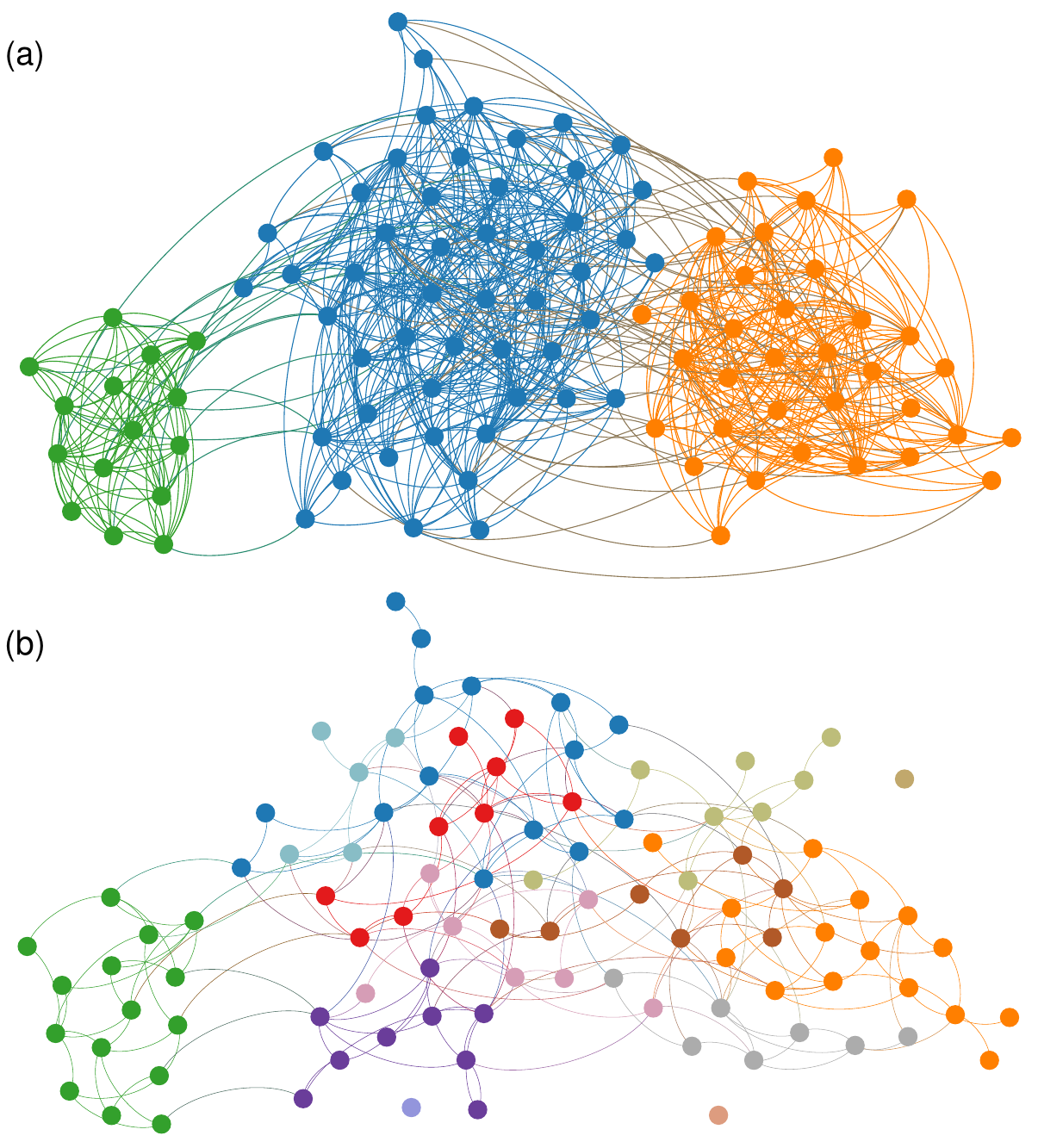} 
 \caption{\label{fig:network} Illustration of the overfitting problem in a small modular network. (a) The network has three communities. (b) When observing only a fraction of the links, the identified thirteen communities misrepresent the underlying network structure.}
\end{figure}

We focus on missing links, a common problem in social and biological networks, that causes the sample estimates of the random walker's transition probabilities to lose precision. When plugging the estimates into the Shannon entropy, the obtained entropy estimator suffers from a negative bias and underestimates the entropy terms of the map equation~\cite{basharin1959entropy}. Consequently, for the same partition $\mathsf{M}$, the description length decreases and the relative code length savings over the one-module solution, $l = 1-L(\mathsf{M})/L(\mathsf{1})$, increases with the number of missing links (Fig.~\ref{fig:crossvalidation}).

Worse yet, underestimating the index and module codebooks distorts their balance and shifts the optimal solution. The index codebook underrates the increase in between-module description length when using more communities, and the module codebooks overrate the within-module compression gain when using smaller communities. Also, stochastic fluctuations in missing links can lead the search algorithm off track because more undersampled regions attract community boundaries. Capitalizing on noise in this way underestimates not only the codebooks but also, primarily, the transition rates between communities. As a result, the map equation favors more and smaller communities in sparse networks with missing links~\cite{ghasemian2019overfitting} (Fig.~\ref{fig:network}). This effect is evident when so many links are missing that actual communities become sparse or even form disconnected components. Then the map equation cannot detect the actual communities; instead it overfits and identifies spurious communities from mere noise in the network. To overcome overfitting, we incorporate a Bayesian estimate of the map equation.

\begin{figure}
 \includegraphics[width=8.6cm]{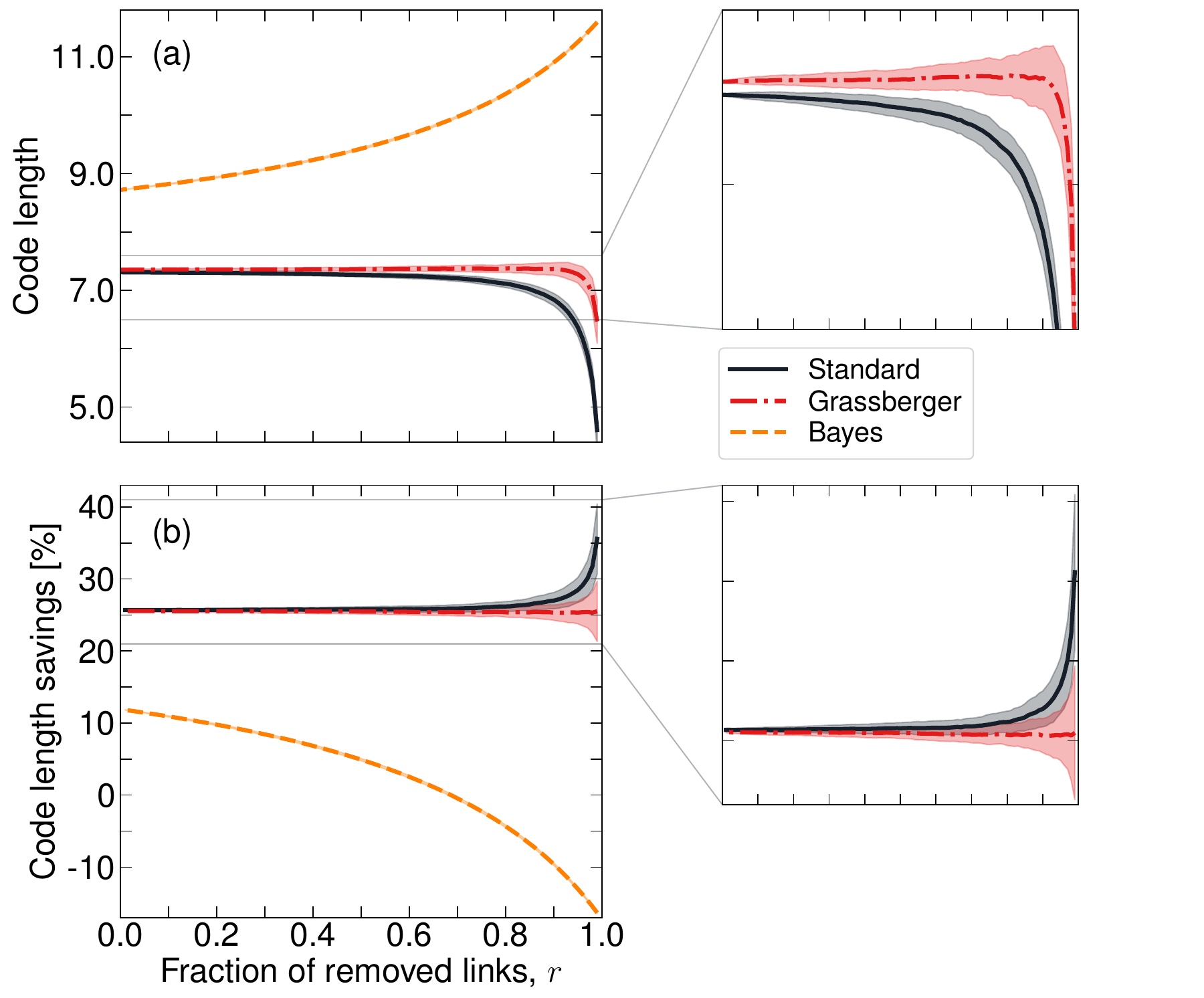}
 \caption{Modular compression in sparse networks. (a) Modular code length for planted partitions after link removal. (b) Relative code length savings in networks for planted partitions.
 With standard entropy estimation, the average description length decreases as the number of missing links increases, and we cannot compare relative code length savings in networks with different densities. In contrast, Grassberger entropy estimation almost eliminates the code length's density dependency. For $r>0.7$, the code length savings are negative for the Bayesian estimate of the map equation with prior $a_{\alpha} = \ln(V)$. By preferring the one-module solution over the planted partition in severely undersampled networks, the Bayesian estimate of the map equation avoids overfitting. For each $r$, we plot averages and variances over 100 network samplings of the synthetic network described in Sec.~\ref{sec:Results}. 
 }
 \label{fig:crossvalidation}
\end{figure}

\subsection{Bayesian estimate of the map equation}
\label{sec:bayesmapeq}
Different methods have been proposed to address the problem of entropy underestimation \cite{miller1955entropy, zahl1977jackknife, grassberger2008entropy, wolpert1995bayes, nemenman2002nsb, archer2014pym}. Methods based on bias reduction cannot prevent overfitting of the map equation because they have a high variance in the undersampled regime~\cite{miller1955entropy, zahl1977jackknife, grassberger2008entropy} and cannot deal with the underestimation of  the transition rates between communities. Instead, we use a Bayesian approach proposed by Wolpert and Wolf to estimate the function of probability distributions~\cite{wolpert1995bayes}. This method not only prevents overfitting to noisy structures better than other Bayesian estimators~\cite{nemenman2002nsb, archer2014pym}; it also enables an analytical estimation of the map equation and a computationally efficient implementation in Infomap.

In general, we seek the Bayesian estimator $\hat{f}_B$ of a function $f(\pmb{\rho})$ that takes a discrete probability distribution $\pmb{\rho} = (\rho_1,\rho_2,\dots,\rho_m)$ as input. When $\pmb{\rho}$ is not given and we have only observations $\pmb{n} = (n_1,n_2,\dots,n_m)$, with $\sum_{i=1}^{m}n_i=N$ sampled according to the distribution $\pmb{\rho}$ ($E(n_i) = \rho_i N$), we must estimate $f(\pmb{\rho})$ using the observed data $\pmb{n}$. The Bayesian estimator for $f(\pmb{\rho})$ is the posterior average, 
\begin{align}\label{eq:estimatorBayes}
 \hat{f}_B(\pmb{n}) = E[f|\pmb{n}] = \int f(\pmb{\rho}) P(\pmb{\rho}|\pmb{n})d\pmb{\rho},
\end{align}
where $P(\pmb{\rho}|\pmb{n})$ is the posterior over the unknown distribution $\pmb{\rho}$ given by Bayes' rule,
\begin{align}\label{eq:bayesrule}
 P(\pmb{\rho}|\pmb{n}) = \frac{P(\pmb{n}|\pmb{\rho})P(\pmb{\rho})}{P(\pmb{n})}.
\end{align}
To obtain $P(\pmb{\rho}|\pmb{n})$, we choose an appropriate prior probability distribution $P(\pmb{\rho})$ and use  the fact that the likelihood
\begin{align}
 P(\pmb{n}|\pmb{\rho}) = N! \prod_{i=1}^{m}\frac{\rho_{i}^{n_i}}{n_i!}
\end{align}
and the total probability of the data
\begin{align}
 P(\pmb{n}) = \int d\pmb{\rho} P(\pmb{n}|\pmb{\rho})P(\pmb{\rho}).
\end{align}

Applied to the map equation, we seek the Bayesian estimator of $f(\pmb{\rho})=L(\mathsf{M})$. Assuming undirected and unweighted links, the transition rate estimates are~\cite{mitzenmacher2005probability}:
\begin{align}
 p_{\alpha}=\frac{k_{\alpha}}{\sum_{\alpha=1}^{V}k_{\alpha}}, \label{eq:nodevisit}\\
 q_{i\enter} = \frac{k_{i\enter}}{\sum_{\alpha=1}^{V}k_{\alpha}}, \label{eq:moduleenter}\\
 q_{i\exit} = \frac{k_{i\exit}}{\sum_{\alpha=1}^{V}k_{\alpha}},\label{eq:moduleexit}
\end{align}
where $k_{\alpha}$ is the degree of node $\alpha$ and $k_{i\enter} = k_{i\exit}$ is the degree of module $i$, the number of links that connect nodes of module $i$ with nodes of other modules $j$, $j \neq i$. However, when the information about links is incomplete, the actual values of node and module degrees can deviate from these estimates. Therefore, we must apply a probabilistic approach, or the map equation will overfit and exploit spurious network structures.

To develop a Bayesian treatment of the map equation, for a given partition $M$, we specify a prior distribution $P(p_{\alpha}, q_{i\enter}, q_{i\exit})$ over the transition rates $p_{\alpha}, q_{i\enter}$, and $q_{i\exit}$. A convenient choice is the Dirichlet distribution, which has simple analytical properties and can be interpreted as a probability distribution over the multinomial distribution of the transition rates,
\begin{align}
\begin{split}
 &P(p_{\alpha}, q_{i\enter}, q_{i\exit}|a_{\alpha}, a_{i\enter}, a_{i\exit}) = \\
 &\frac{\Gamma(a_1+\dots+a_{m\exit})}{\Gamma(a_1) \dots \Gamma(a_{m\exit})} \prod_{\alpha=1}^{V}p_{\alpha}^{a_\alpha-1} \prod_{i=1}^{m}q_{i\enter}^{a_{i\enter}-1} \prod_{i=1}^{m}q_{i\exit}^{a_{i\exit}-1}.
\end{split}
\end{align}
Here $\Gamma(x)$ is the gamma function and $a_{1},\dots,a_{V}$, $a_{1\enter},\dots,a_{m\enter}$, and $a_{1\exit},\dots,a_{m\exit}$ are the parameters of the distribution. While $\sum_{\alpha=1}^{V}p_{\alpha} + \sum_{i=1}^{m}q_{i\enter} + \sum_{i=1}^{m}q_{i\exit} \neq 1$, we can use normalized transition rates because the map equation is scale invariant (see Appendix~\ref{appendixA}).

We obtain the posterior distribution of the transition rates in Eq.~(\ref{eq:bayesrule}) by multiplying the Dirichlet prior by the likelihood function and normalizing:
\begin{align}\label{eq:posteriorDirichlet}
\begin{split}
 P&(p_{\alpha}, q_{i\enter}, q_{i\exit}|k_{\alpha}, k_{i\enter}, k_{i\exit}, a_{\alpha}, a_{i\enter}, a_{i\exit}) \propto \\
 &\prod_{\alpha=1}^{V}p_{\alpha}^{k_\alpha + a_\alpha - 1} \prod_{i=1}^{m}q_{i\enter}^{k_{i\enter} + a_{i\enter} - 1} \prod_{i=1}^{m}q_{i\exit}^{k_{i\exit} + a_{i\exit} - 1}.
\end{split}
\end{align}
By combining this distribution and the expanded form of the map equation,
\begin{align}\label{eq:mapeqexpanded}
\begin{split}
 L&(\mathsf{M}) = -\sum_{\alpha=1}^{V}p_{\alpha}\log_2(p_{\alpha})-\sum_{i=1}^{m}q_{i\exit}\log_2(q_{i\exit})\\
       & +\sum_{i=1}^{m}\left(q_{i\exit}+\sum_{\alpha \in i} p_{\alpha}\right)\log_2\left(q_{i\exit}+\sum_{\alpha \in i} p_{\alpha}\right)\\
       & -\sum_{i=1}^{m}q_{i\enter}\log_2(q_{i\enter}) \\
       & + \left(\sum_{i=1}^{m}q_{i\enter}\right)\log_2\left(\sum_{i=1}^{m}q_{i\enter}\right),
\end{split}
\end{align}
in Eq.~(\ref{eq:estimatorBayes}), and integrating, we obtain a closed formula for the posterior average of the map equation,
\begin{align}\label{eq:estimatormapeq}
 \begin{split}
  \hat{L}_B&(\mathsf{M}) = \frac{1}{\ln(2)}\frac{1}{\sum_{\alpha=1}^{V}u_{\alpha}} \\
              & \times [ -\sum_{\alpha=1}^{V}u_{\alpha}\psi(u_{\alpha}+1) - \sum_{i=1}^{m}u_{i\exit}\psi(u_{i\exit}+1)\\
              & + \sum_{i=1}^{m}(u_{i\exit} + \sum_{\alpha \in i} u_{\alpha})\psi(u_{i\exit} + \sum_{\alpha \in i} u_{\alpha} + 1 ) \\  
              & - \sum_{i=1}^{m}u_{i\enter}\psi(u_{i\enter}+1)\\
              & + (\sum_{i=1}^{m}u_{i\enter})\psi(\sum_{i=1}^{m}u_{i\enter} + 1) ],
 \end{split}
\end{align}
where $u_x = k_x + a_x$ and $\psi(x)$ is the digamma function.

 The parameters $\pmb{a}$ reflect our prior assumption of the link distribution in the network before we observed the network data. After seeing the data, we update our assumption by increasing the value of $a_x$ by $k_x$ and obtain the posterior distribution. For a sparse, undersampled network, therefore, the prior parameters $\pmb{a}$ dominate the posterior link distribution. Conversely, as the network density increases, the posterior distribution becomes sharply peaked and the network data dominate the posterior link distribution. Proper selection of prior parameters $\pmb{a}$ is important for good performance.

We consider as an uninformative prior an Erd\H{o}s-R\'{e}nyi network with $V$ nodes, where each pair of nodes is connected with some constant probability $p$ \cite{erdos1959randomgraph}. The average degree is $\langle k \rangle = p V$ and sets the prior parameters to $a_{\alpha}=\langle k \rangle$ and $a_{i\enter}=a_{i\exit}=V_i(V-V_i)\frac{\langle k \rangle}{V-1}$, where $V_i$ is the number of nodes in module $i$. We aim to choose the average degree $\langle k \rangle$ such that the prior prevents the map equation from overfitting in the undersampled network, but also enables the map equation to detect well-formed communities. Since the random network experiences a phase transition from disconnected to connected at $\langle k \rangle = \ln(V)$ \cite{erdos1959randomgraph}, for $\langle k \rangle \ll \ln(V)$ the random network has isolated components and the prior cannot prevent overfitting, while for $\langle k \rangle \gg \ln(V)$ well-formed communities can merge such that the map equation underfits. At the phase transition between these extremes, $a \sim \ln(V)$ forms a principled prior.

Because there are no modular regularities in an Erd\H{o}s-R\'{e}nyi network, this choice of prior induces positive bias in the code length estimation (Fig.~\ref{fig:crossvalidation}(a)). When observing fewer links in a network, the prior network influences the posterior link distribution more such that the code length increases for the planted partition. Eventually, for severely undersampled networks, the Bayesian estimate of the map equation prefers the one-module solution and thereby avoids overfitting (Fig.~\ref{fig:crossvalidation}(b)).

This Bayesian estimate of the map equation extends to weighted networks where complete information about link weights is missing. If the link weights represent flows such that no flow modeling is necessary, the method also works for directed networks.

We have implemented the Bayesian estimate of the map equation in Infomap, available for anyone to use~\cite{infomap}. While we restrict our paper to the two-level formulation of the map equation for the sake of simplicity, the code also handles the Bayesian estimate of the multilevel map equation (see Appendix~\ref{appendixB}).

\subsection{The map equation with Grassberger entropy estimation}

An informative comparison between the standard map equation and a map equation with corrected entropy terms must take into account the structural properties of the detected communities. When possible, we can compare detected communities with planted communities; however, this approach does not work for real networks without known communities. To test for under- or overfitting in any network, we use cross-validation.

We first split the network data into training and test sets and apply Infomap to identify the partition that maximally compresses the description length of the training network. If Infomap successfully recovers a significant partition of the training network, the partition with maximal modular code length savings over the one-level code length will also successfully compress the description length of the test network. The opposite happens when there is not enough evidence in the data. Then Infomap overfits and detects a partition in the training network without code length savings in the test network. Thus, if Infomap detects a significant partition $\mathsf{M}$ without overfitting, the relative code length savings in the test network should be positive, $l^{test} = 1-L_{test}(\mathsf{M})/L_{test}(\mathsf{1}) > 0$ and close to the relative code length savings of the training network, $l^{test} \sim l^{train}$. Conversely, if Infomap overfits we expect $l^{test} < 0$.

However, the fact that the description length and the relative code length savings vary with the fraction of observed links limits the choice of training and test networks (Fig.~\ref{fig:crossvalidation}). Only with equal-sized training and test networks will the standard map equation underestimate their true description lengths to the same degree. But since equal splits waste half of the links on the test network, the training network of already sparse networks will be severely undersampled and possibly below the detectability limit. To reduce the description length's dependency on the fraction of observed links and enable effective cross-validation, we incorporate Grassberger entropy estimation~\cite{grassberger2008entropy} into the map equation.

For effective cross-validation, Grassberger entropy estimation enables the use of most of the links in the training network. We construct a test network by randomly removing a fraction $r$ of links from the network. The remaining links form a training network. With $E$ for the total number of links in the network and $k_{\alpha}$ for the degree of node $\alpha$, the probability that $k'_{\alpha}$ links of node $\alpha$ remain in the training network after removing $E-E'=rE$ links follows the hypergeometric distribution:
\begin{align}
 P(k'_{\alpha})=\frac{\binom{k_{\alpha}}{k'_{\alpha}}\binom{E-k_{\alpha}}{E'-k'_{\alpha}}}{\binom{E}{E'}}.
\end{align}
If $E$, $E'$, and $k_{\alpha}$ are sufficiently large, the hypergeometric distribution converges toward the Poisson distribution,
\begin{align}
 P(k'_{\alpha}) = \frac{\lambda^{k'_{\alpha}} }{k'_{\alpha}!}e^{-\lambda},
\end{align}
where the parameter $\lambda=\frac{E'k_{\alpha}}{E}=(1-r)k_{\alpha}$ such that $\langle k'_{\alpha} \rangle = (1-r)k_{\alpha}$.

For a given incomplete set of observations $(n_1,n_2,\dots,n_m)$, Grassberger entropy estimation assumes that they come from Poisson distributions with mean values $(z_1,z_2,\dots,z_m)$ and aims to construct a function $\phi(n)$ that minimizes the error $|z_{i}\ln(z_{i})-E(n_{i}\phi(n_i))|$ across all values of $z_i$~\cite{grassberger2008entropy}. The solution that minimizes the error is a recursive function $\phi(n)=G_n$ defined as
\begin{align}\label{eq:Gn}
\begin{split}
G&_1=-\gamma-\ln(2)\\
G&_2=2-\gamma-\ln(2)\\
G&_{2n+1}=G_{2n}\\
G&_{2n+2}=G_{2n}+\frac{2}{2n+1},
\end{split}
\end{align}
where $\gamma$ is Euler's constant~\cite{grassberger2008entropy}. 

While we cannot use Grassberger entropy estimation for weighted or directed networks, where visit rates correspond to the PageRank of the nodes~\cite{edler2017mapequation}, it does work for unweighted and undirected networks, where node visit and module transition rate estimates are given by link counts, Eqs.~(\ref{eq:nodevisit})--(\ref{eq:moduleexit}). Assuming incomplete observations, we can incorporate Grassberger entropy estimation into the map equation such that Eq.~(\ref{eq:mapeqexpanded}) takes the form
\begin{align}\label{eq:mapeqGrassberger}
 \begin{split}
  \hat{L}_G(\mathsf{M}) &= \frac{1}{\ln(2)}\frac{1}{\sum_{\alpha=1}^{V}k_{\alpha}} \\
              & \times [ -\sum_{\alpha=1}^{V}k_{\alpha}G_{k_{\alpha}} - \sum_{i=1}^{m}k_{i\exit}G_{k_{i\exit}}\\
              & + \sum_{i=1}^{m}(k_{i\exit} + \sum_{\alpha \in i} k_{\alpha})G_{k_{i\exit} + \sum_{\alpha \in i} k_{\alpha}} \\  
              & - \sum_{i=1}^{m}k_{i\enter}G_{k_{i\enter}} + (\sum_{i=1}^{m}k_{i\enter})G_{\sum_{i=1}^{m}k_{i\enter}} ].
 \end{split}
\end{align}
Grassberger entropy estimation also works for the multilevel formulation of the map equation~\cite{rosvall2011multilevelmapeq}.

Grassberger entropy estimation has high variance and low bias~\cite{schurmann2004bias}. Due to its high variance in the undersampled regime (Fig.~\ref{fig:crossvalidation}) and its lack of prior that can deal with underestimating the transition rates between communities, the map equation with Grassberger entropy estimation paired with Infomap does not perform better than the standard map equation on sparse networks with missing links. However, thanks to its low bias, the map equation with Grassberger entropy estimation applied to cross-validation with averaged code length over several network samplings can dramatically reduce the code length dependency on network density (Fig.~\ref{fig:crossvalidation}(a)). Also, for planted partitions, the average relative code length savings is practically independent of network density (Fig.~\ref{fig:crossvalidation}(b)). Consequently, we can use most links in the training network to reliably detect communities with Infomap.

\section{Results and discussion}\label{sec:Results}
We first analyze a synthetic network with planted community structure and a real-world Jazz collaboration network~\cite{gleiser2003jazznet}. We generate the synthetic network with the Lancichinetti-Fortunato-Radicchi (LFR) method~\cite{lancichinetti2008benchmark}. It has $V=1000$ nodes, average node degree $\langle k \rangle = 16$, and nodes partitioned into $M=35$ communities. The mixing parameter $\mu=0.3$ is the probability that a randomly chosen link will connect nodes from different communities. In the Jazz collaboration network, each node represents a band and two nodes are connected if there is at least one musician who has played in both bands. For this network with 198 nodes and 2,742 links, there is no information about ground-truth communities and no consensus about an optimal community partition \cite{newman2016resolution, peel2017groundtruth}. To generate sparse networks with missing links, we randomly remove a fraction $r$ of links from the networks, and average the results for each value of $r$ over 100 samplings.

Using these two networks, we compare the performance of the standard map equation, the Bayesian estimate of the map equation with different values of Dirichlet prior parameter $a_\alpha$, and the degree-corrected stochastic block model~\cite{peixoto2017dcsbm, peixoto2020mergesplit}. We are interested in the number of communities, the partition similarities measured with the adjusted mutual information (AMI), and the predictive accuracy with cross-validation. Since the map equation and the degree-corrected stochastic block model use stochastic search algorithms to detect communities, we average the results over ten searches for each of the 100 network samplings.

We analyze the Bayesian approach for prior $a \sim \ln(V)$. For the node degree, therefore, we use $a_{\alpha} = C \ln(V)$, where $\alpha=1 \dots V$ and $C$ is a constant that we need to specify. For the module degree, we use $a_{i\exit} = a_{i\enter} = \nu_i C \ln(V)$, where $\nu_i = V_i\frac{V-V_i}{V-1}$ for $i=1 \dots M$ and $V_i$ is the number of nodes in module $i$.

\subsection{Number of communities}

\begin{figure}
 \includegraphics[width=8.6cm]{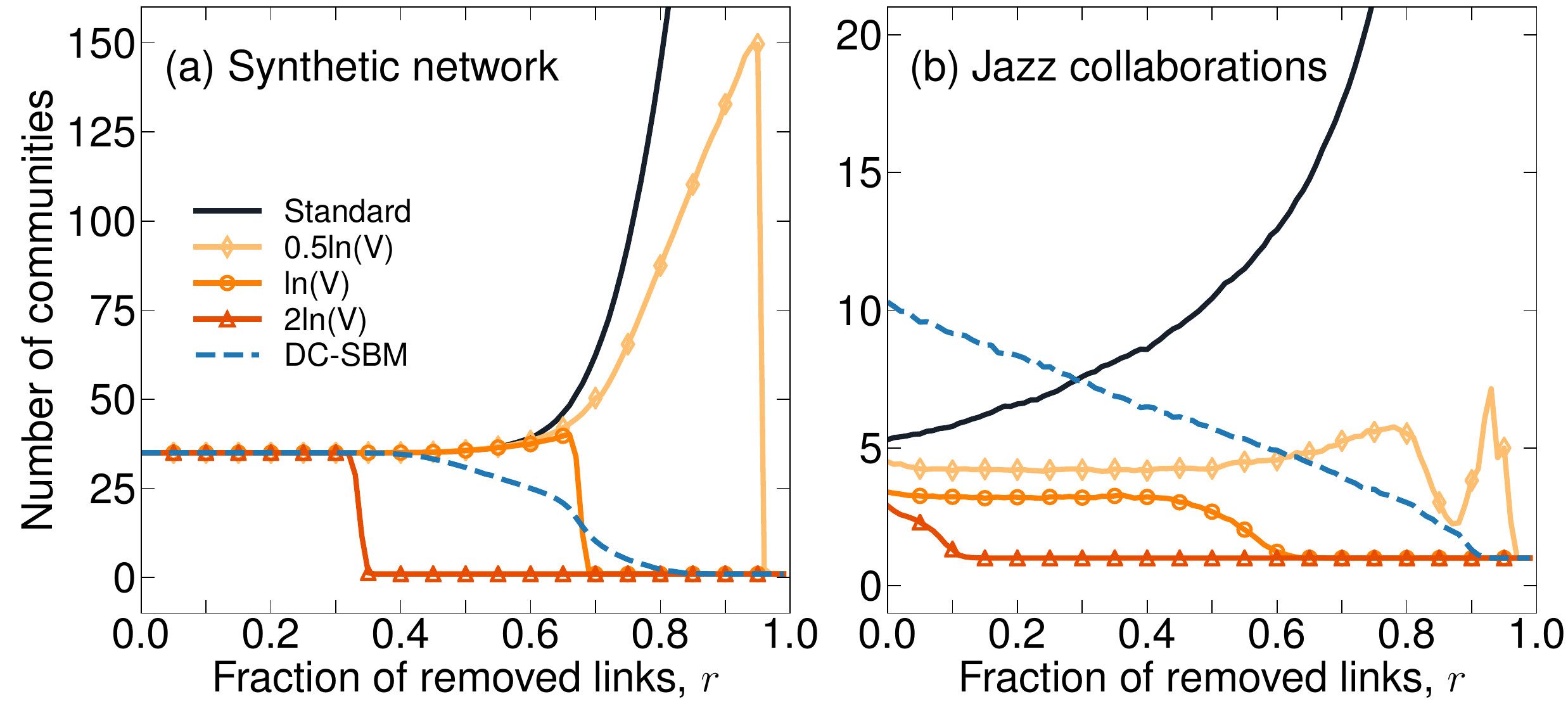}
 \caption{Mean number of communities obtained by the standard map equation, the Bayesian estimate of the map equation with different values of Dirichlet prior parameter $a$, and the degree-corrected stochastic block model (DC-SBM). The Bayesian estimate of the map equation with prior $a_{\alpha}=\ln(V)$ provides the best solution: when sufficient network data are available it distinguishes significant communities from mere noise, while in the undersampled regime it detects no community structure. Results are averaged over 100 network samplings and ten algorithm searches. The standard error of the mean is never higher than 0.58.}
 \label{fig:prior_nummodules}
\end{figure}

Applied to the synthetic network, the standard map equation favors the planted partition until we remove more than approximately $55\%$ of the links (Fig.~\ref{fig:prior_nummodules}(a)). As we remove more links, the network also becomes sparse within communities. In the undersampled regime below the detectability limit where it is not possible to recover the planted partition, the map equation overfits to random fluctuations and favors more, smaller communities. The Bayesian estimate of the map equation behaves differently. For $C=0.5$, the random prior network is weakly connected and cannot prevent overfitting when we remove $70-95\%$ of the links. In contrast, for $C=2$, the random prior network is densely connected and hides the communities in the noise induced by the prior such that the Bayesian estimate of the map equation underfits even when sufficient network data are available. In between, at the critical point where the random prior network becomes connected, the prior constant $C=1$ balances over- and underfitting and prevents the detection of spurious communities. Moreover, the amount of  noise that this prior network induces in the original network is so low that it does not wash out any significant community structure. While prior parameter $C$ between 0.5 and 1 performs best for some analyzed networks, $C=1$ remains a robust choice in general (Appendix~\ref{appendixC}).

The degree-corrected stochastic block model detects the planted partition until we remove more than $40\%$ of the links from the synthetic network. Compared to the Bayesian estimate of the map equation with the prior constant $C=1$, the degree-corrected stochastic block model starts to underfit the planted partition earlier. For $r>40\%$, the number of communities decreases continuously and when $r>80\%$, the degree-corrected stochastic block model detects no community structure.

Similar behaviors appear accentuated when we apply the methods to the real-world Jazz collaboration network (Fig.~\ref{fig:prior_nummodules}(b)). For the standard map equation, the number of detected communities increases with the number of missing links, whereas the degree-corrected stochastic block model shows the opposite trend. Unlike when applied to the synthetic network, the various map equation variants already favor different partitions before removing any links. The Bayesian estimate of the map equation detects fewer communities than the standard map equation, and its performance depends on the choice of the prior. For $C = 0.5$, the average number of communities is relatively stable when more than $50\%$ of the links remain. However, if we remove more than $50\%$ of the links, the number of communities increases because the prior parameter is too low. As for the synthetic network, the prior parameter $C = 2$ is too high and causes underfit: the method detects no community structure when we remove more than $10\%$ of the links. Again, $C = 1$ offers a good tradeoff. The number of communities is approximately constant as long as at least $50\%$ of the links remain and then decreases to 1 when fewer than $40\%$ of the links remain, where the method deduces that there no longer exists any significant community structure. 

\begin{figure}
 \includegraphics[width=8.6cm]{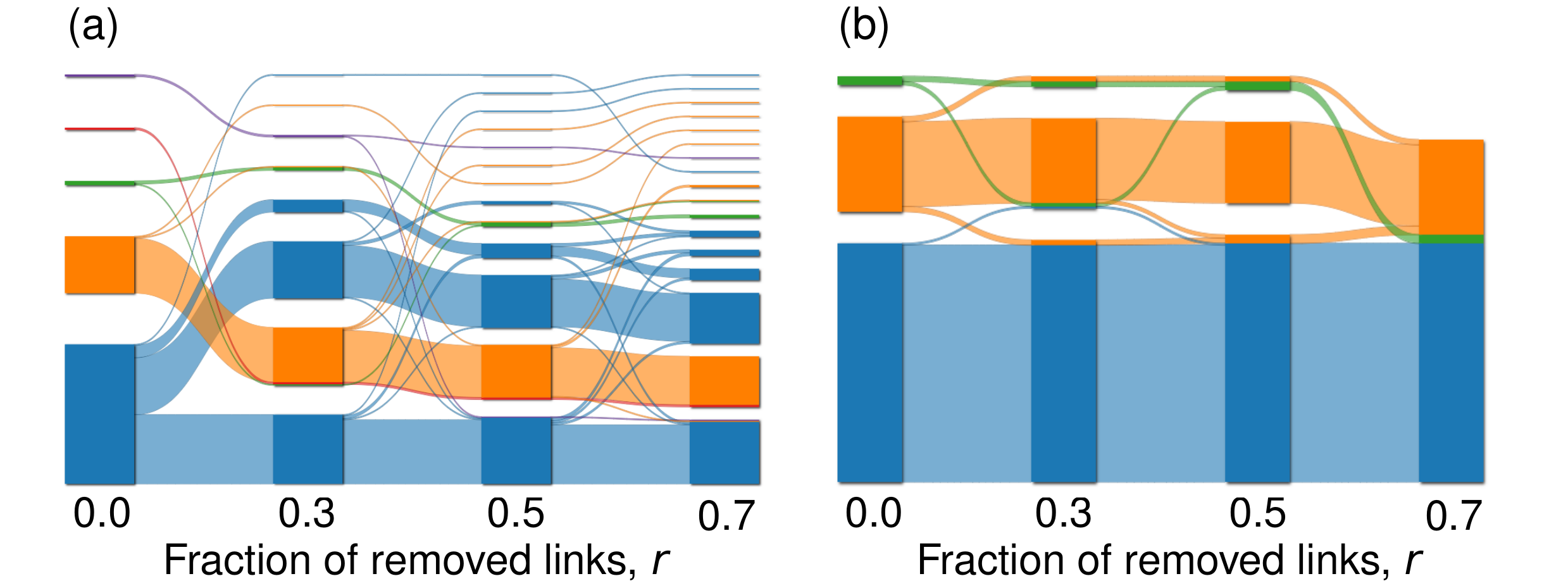}
 \caption{Alluvial diagrams of the Jazz collaboration network show changes in community structure with missing links for (a) the standard map equation and (b) the Bayesian estimate of the map equation with prior $a_{\alpha} = \ln(V)$. Compared to the standard map equation, the communities detected using the Bayesian estimate of the map equation are more robust to missing links.}
 \label{fig:alluvialDiagram}
\end{figure}

We illustrate differences in the community structure of the Jazz collaboration network induced by missing links for the standard and Bayesian map equation with alluvial diagrams~\cite{rosvall2010bootstrap}. The standard map equation identifies more and smaller communities with sparser networks, whereas its Bayesian estimate keeps similar communities with few changes before collapsing into one community when only 30\% of the links remain. The Bayesian estimate's prior assumption of missing links prevents the map equation from splitting communities when the networks lose links (Fig.~\ref{fig:alluvialDiagram}).

\subsection{Adjusted mutual information}

\begin{figure}[ht]
 \includegraphics[width=8.6cm]{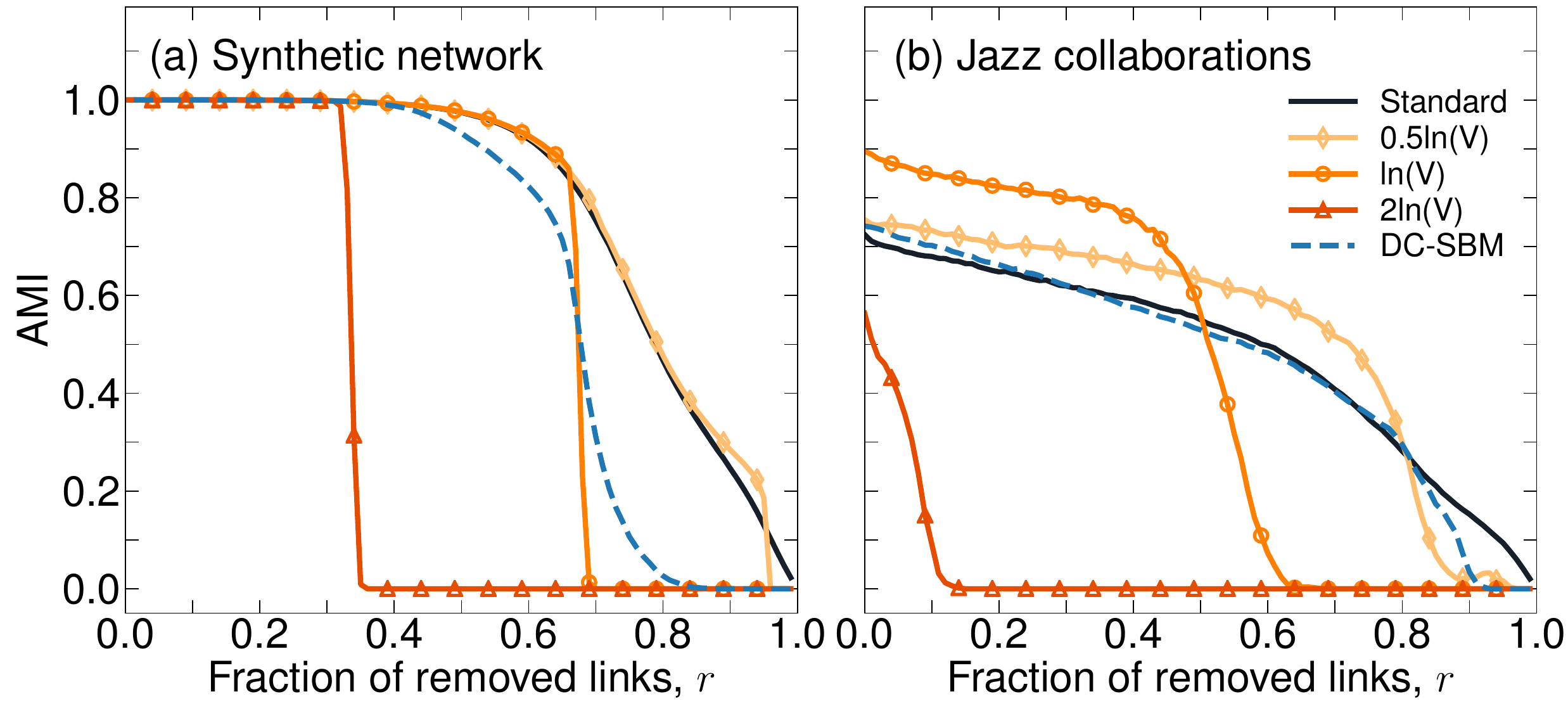}
 \caption{Performance tests of the community-detection algorithms using AMI. (a) AMI scores with the planted partition of the synthetic network as reference. (b) AMI scores with a partition obtained for the complete Jazz collaboration network as reference. The Bayesian estimate of the map equation with prior $a_{\alpha} = \ln(V)$ gives the most robust results when it is possible to detect significant communities. Results are averaged over 100 network samplings and ten algorithm searches. The standard error of the mean is never higher than 0.01.}
 \label{fig:ami}
\end{figure}

Adjusted mutual information (AMI) is a standard measure used to compare two different partitions \cite{vinh2010ami}. For the synthetic network, we compare identified partitions with the planted partition. The standard map equation successfully recovers the planted partition when more than $60\%$ of the links are available (AMI$=1$). When we remove more links, the accuracy decreases (Fig.~\ref{fig:ami}(a)). The Bayesian estimate of the map equation with prior constant $C=0.5$ has almost the same accuracy. If we use $C=1$ instead, the method performs slightly better when we remove $40-60\%$ of the links. Again, when we remove more than $65\%$ of the links, the Bayesian estimate of the map equation with prior constant $C=1$ deduces that there no longer exists any significant community structure and AMI$=0$.

To measure the AMI for the Jazz collaboration network, which has no planted partition, we compare the partitions that the community detection methods return for networks with different fractions of missing links to the partitions they return for the complete network. For the complete network, we measure the average AMI over ten searches. The Bayesian estimate of the map equation with prior $a_{\alpha} = \ln(V)$ is the most consistent method when it is possible to detect significant communities (Figure~\ref{fig:ami}(b)).

\begin{figure}[ht]
 \includegraphics[width=8.6cm]{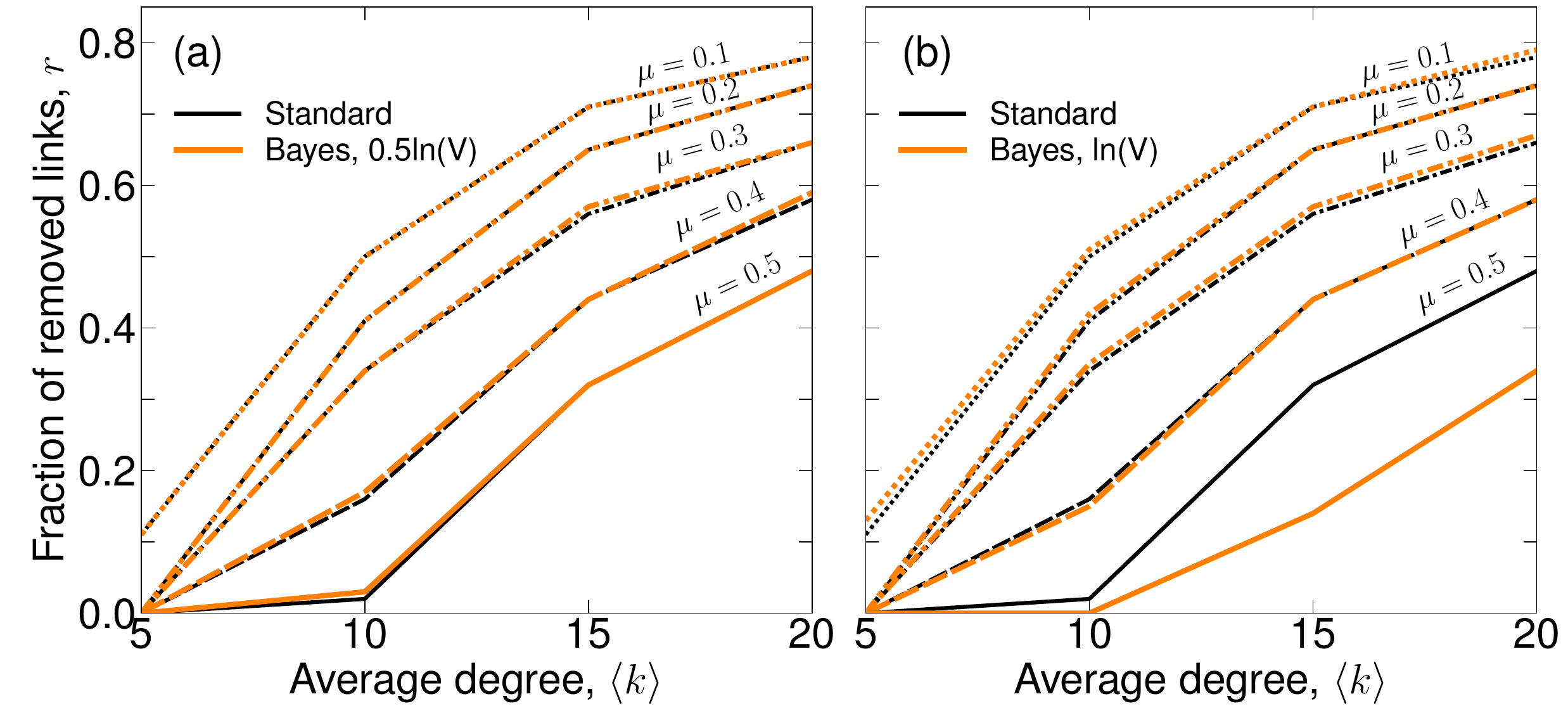}
 \caption{Impact of network structure on the performance of the standard map equation and its Bayesian estimate. Prior parameter $C = 0.5$ in (a) and $C = 1$ in (b). For LFR networks with $V=1000$ nodes and various densities $\langle k \rangle$ and mixing parameters $\mu$, we show the critical fraction of removed links $r(\langle k \rangle,\mu)$ where the AMI between the planted partition and the identified partition falls below 0.9. Except for weak community structures ($\mu=0.5$), where the Bayesian estimate with prior constant $C=1$ underfits for lower fraction of removed links than the standard map equation overfits, the methods are on par. Results are averaged over ten network samplings and ten algorithm searches.}
 \label{fig:lfrs}
\end{figure}

In both synthetic and real-world networks when $\langle k \rangle > \ln(V)$, the Bayesian estimate of the map equation with prior constant $C=1$ shows robust performance. However, when $C=2$ it can fail to detect their community structure due to the high level of noise induced by the prior. To understand how the noise induced by the prior in the Bayesian estimate of the map equation affects community detection in sparse networks with $\langle k \rangle \sim \ln(V)$ and weak community structure, we test the performance on a range of different networks. We generate LFR networks with various values of average degree and mixing parameter, randomly remove a fraction of links, detect communities using the standard map equation and its Bayesian estimate with prior $a_{\alpha} = 0.5\ln(V)$ and $\ln(V)$, and classify the community detection as successful when the AMI between the planted partition and the identified partition is 0.9 or higher. Even if the random prior network has higher density than the original network, the Bayesian estimate of the map equation achieves the same performance as the standard map equation when the community structure is well defined ($\mu<0.5$). However, if the community structure is weak ($\mu=0.5$), the prior $a_{\alpha} = \ln(V)$ can cause underfit before the standard map equation starts to overfit to noise induced by missing links (Fig.~\ref{fig:lfrs}). These results rely on the cost of overfitting and underfitting implied by the AMI. Specific networks or research questions may require other penalties for many or few communities.

\subsection{Cross-validation}

Cross-validation allows us to compare model-selection performance without planted or known partitions. 
We validate the significance of network partitions returned by Infomap for training networks with a fraction $1-r$ of links using the standard map equation and its Bayesian estimate (Fig.~\ref{fig:relativesavings}).

\begin{figure}
 \includegraphics[width=8.6cm]{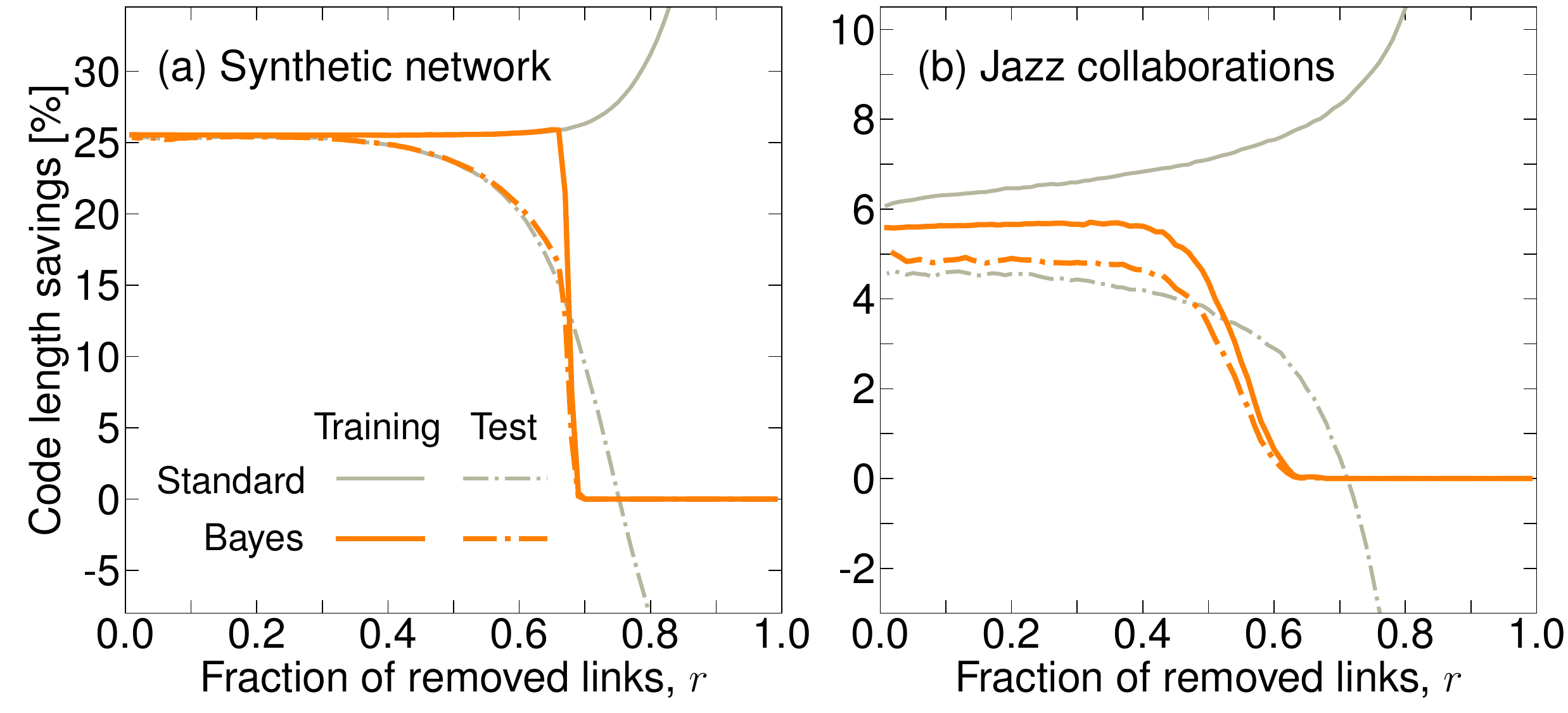}
 \caption{Performance tests of the map equation with and without Bayesian estimates using cross-validation. The Bayesian estimate of the map equation with prior $a_{\alpha} = \ln(V)$ prevents overfitting in the undersampled regime. Results are averaged over 100 network samplings and ten algorithm searches. The code length is measured with Grassberger entropy estimation. The standard error of the mean is never higher than 0.38.}
 \label{fig:relativesavings}
\end{figure}

As the link density of the training network decreases below the detectability limit, the standard map equation mistakes noisy substructures in the sparse training networks for actual communities. As a result, the relative code length savings in the training and test networks diverge, and partitions obtained with the standard map equation give negative code length savings in the test network. In contrast, the Bayesian estimate of the map equation with prior constant $C=1$ prevents overfitting in the sparse training network, implying that there is no significant community structure.

\begin{table*}
\caption{Comparison between partitions detected by the standard map equation and the Bayesian estimate of the map equation for six real-world networks. The notations m$_{0.25}$ and m$_{1.0}$ refer to the number of communities in the network with $25\%$ removed links and the complete network, respectively. The last two columns report the code length savings of test and training networks for partitions detected in the training networks with $25\%$ removed links}
\label{table:realnet}
\begin{tabular}
{Wl{1.5cm}Wr{1.5cm}Wr{1.7cm}Wc{0.8cm}Wl{2.3cm}Wr{0.5cm}Wr{1.2cm}Wc{0.4cm}Wr{1cm}Wr{1.1cm}}
\mytoprule
 Network & Nodes & Links & & Method & m$_{0}$ & m$_{0.25}$ & & $l_{0.25}^{train} (\%)$ & $l_{0.25}^{test} (\%)$ \\
 \mymidrule
 \rule{0pt}{3ex}\multirow{2}{1.6cm}{AstroPh} & \multirow{2}{*}{17,903} & \multirow{2}{*}{197,031} & & Bayes & 707 & 771 & & 24 & 18 \\
  & & & & Standard & 663 & 1,080 & & 24 & 18\\
\addlinespace
 \multirow{2}{1.6cm}{Email} & \multirow{2}{*}{1,133} & \multirow{2}{*}{5,451} & & Bayes & 34 & 1 & & 0 & 0 \\
  & & & & Standard & 50 & 104 & & 16 & 2 \\
\addlinespace
 \multirow{2}{1.6cm}{Erd\H{o}s N1} & \multirow{2}{*}{466} & \multirow{2}{*}{1,600} & & Bayes & 1 & 1 & & 0 & 0 \\
  & & & & Standard & 38 & 67 & & 17 & -9 \\
\addlinespace
 \multirow{2}{1.6cm}{Football} & \multirow{2}{*}{115} & \multirow{2}{*}{613} & & Bayes & 9 & 9 & & 18 & 15 \\
  & & & & Standard & 10 & 11 & & 20 & 16 \\
\addlinespace
 \multirow{2}{1.6cm}{PGP} & \multirow{2}{*}{10,680} & \multirow{2}{*}{24,316} & & Bayes & 956 & 1,057 & & 49 & 19 \\
  & & & & Standard & 897 & 2,210 & & 49 & 16 \\
\addlinespace
 \multirow{2}{1.6cm}{Polblogs} & \multirow{2}{*}{1,222} & \multirow{2}{*}{16,717} & & Bayes & 24 & 23 & & 6 & 5 \\
  & & & & Standard & 33 & 80 & & 6 & 5 \\
 \mybottomrule
 \end{tabular}
\end{table*}

To complement with results for other networks, we provide summary statistics for six real-world networks often used to evaluate the performance of community detection algorithms  (Table~\ref{table:realnet}). The networks include a collaboration network in Astrophysics extracted from the arXiv (AstroPh)~\cite{leskovec2007arxiv}, the network of e-mails exchanged between members of the University Rovira i Virgili (Email)~\cite{ebel2002email}, a collaboration network of authors with Erd\H{o}s number 1 (Erd\H{o}s N1)~\cite{batagelj2000erdos}, the American College Football network (Football)~\cite{newman2004football}, the PGP social network of trust (PGP)~\cite{boguna2004pgp}, and the network of political weblogs (Polblogs)~\cite{adamic2005polblogs}. In all networks, the standard map equation returns partitions with a higher number of communities when links are missing. Except for the Football network, the number of detected communities increases by $60\%$ or more compared with the number of communities detected in the complete network. In contrast, except for the AstroPh and PGP networks, the Bayesian estimate of the map equation with prior constant $C=1$ identifies partitions with fewer communities. Nevertheless, the different community structures detected by the two methods result in similar relative code length savings in all networks but the Email and Erd\H{o}s N1 networks. They are sparse with $\langle k \rangle < \ln(V)$. In the complete Email network, the Bayesian estimate of the map equation detects 34 communities but underfits and detects no community structure after removing $25\%$ of the links. After removing links in the Erd\H{o}s N1 network, the standard map equation overfits and detects communities that, when applied to the test network, gives worse compression than the one-module solution. The Bayesian estimate of the map equation prevents this overfitting by preferring the one-module solution over any non-trivial solution.

Overall, the model-accuracy results quantified by number of communities, AMI scores, and code length savings in cross-validation on synthetic and real-world networks suggest that the analyzed network and research question should determine whether to use the standard map equation or its Bayesian estimate. Choose the standard map equation when the network data are complete or when extra communities caused by missing links are not a problem. Choose its Bayesian estimate when spurious communities can harm the analysis.

\section{Conclusion}

We have derived a Bayesian approach of the map equation that imposes prior information about the network structure to reduce overfitting for sparse networks with missing links. Using an uninformative Dirichlet prior, we show that the Bayesian estimate of the map equation avoids finding spurious communities in sparse synthetic and real-world networks with missing links. With a properly chosen prior constant, the proposed method successfully balances the impact of the imposed prior against the observed network data: The Bayesian estimate of the map equation provides a principled approach to reducing overfitting and detecting significant communities in two or more levels. We also show how to asses whether communities are significant using more effective cross-validation with Grassberger entropy estimation, which enables larger training networks. The computational overhead of the methods compared with the standard map equation is low. We anticipate that more reliable flow-based community detection of undersampled networks will be useful in many applications, including better prediction of missing links.

\section*{ACKNOWLEDGMENTS}

    We thank Vincenzo Nicosia and Leto Peel for stimulating discussions and Christopher Bl\"{o}cker, Anton Eriksson, and Alexis Rojas for useful comments on the manuscript. J.S., D.E., and M.R.\ were supported by the Swedish Research Council, Grant No.\ 2016-00796.

\appendix

\section{Normalized transition rates}\label{appendixA}

\begin{proposition}{}
The map equation, 
\begin{align}
\begin{split}
 L&(\mathsf{M}) = -\sum_{\alpha=1}^{V}p_{\alpha}\log_2(p_{\alpha})-\sum_{i=1}^{m}q_{i\curvearrowright}\log_2(q_{i\curvearrowright})\\
       & +\sum_{i=1}^{m}\left(q_{i\curvearrowright}+\sum_{\alpha \in i} p_{\alpha}\right)\log_2\left(q_{i\curvearrowright}+\sum_{\alpha \in i} p_{\alpha}\right)\\
       & -\sum_{i=1}^{m}q_{i\curvearrowleft}\log_2(q_{i\curvearrowleft}) + \left(\sum_{i=1}^{m}q_{i\curvearrowleft}\right)\log_2\left(\sum_{i=1}^{m}q_{i\curvearrowleft}\right),
\end{split}
\end{align}
is a scale invariant function.
\end{proposition}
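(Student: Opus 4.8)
The plan is to prove the statement as a homogeneity property: if we multiply each transition rate $p_{\alpha}$, $q_{i\enter}$, and $q_{i\exit}$ by a common positive constant $c$, then $L(\mathsf{M})$ is multiplied by $c$. Because the overall factor relating unnormalized link counts to probability-normalized rates is fixed by the network and does not depend on the partition $\mathsf{M}$, this homogeneity is exactly what lets us substitute unnormalized (or Dirichlet-normalized) rates for the normalized ones without changing which partition minimizes the map equation. First I would use the fully expanded form of $L(\mathsf{M})$ above, in which every summand has the shape $x\log_2 x$ for an argument $x$ that is a \emph{linear} combination of the transition rates: namely $p_{\alpha}$, $q_{i\exit}$, the aggregate $q_{i\exit}+\sum_{\alpha\in i}p_{\alpha}$, $q_{i\enter}$, and $\sum_{i}q_{i\enter}$.

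Next I would substitute the rescaled rates and split each term with $\log_2(cx)=\log_2 c+\log_2 x$. Since every argument $x$ is linear and therefore scales to $cx$, each summand $x\log_2 x$ becomes $c\,x\log_2 x + (c\log_2 c)\,x$. Summing over all five groups with their signs, the first pieces reassemble $c\,L(\mathsf{M})$, while the second pieces collect into a single factor $c\log_2 c$ times the signed sum of all the arguments,
\begin{align}
\Delta = -\sum_{\alpha}p_{\alpha} - \sum_{i}q_{i\exit} + \sum_{i}\Bigl(q_{i\exit}+\sum_{\alpha\in i}p_{\alpha}\Bigr) - \sum_{i}q_{i\enter} + \sum_{i}q_{i\enter}.
\end{align}
It then remains only to show that $\Delta=0$.

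The crux of the argument, and the one step needing care rather than routine manipulation, is this cancellation. Using that the communities partition the node set, so that $\sum_{i}\sum_{\alpha\in i}p_{\alpha}=\sum_{\alpha}p_{\alpha}$, the aggregated module term contributes $\sum_{i}q_{i\exit}+\sum_{\alpha}p_{\alpha}$, which exactly annihilates the two leading negative contributions $-\sum_{\alpha}p_{\alpha}$ and $-\sum_{i}q_{i\exit}$; likewise the index-codebook term $+\sum_{i}q_{i\enter}$ cancels $-\sum_{i}q_{i\enter}$. Hence $\Delta=0$, the anomalous $c\log_2 c$ contribution vanishes identically, and we are left with $L(c\,p_{\alpha},c\,q_{i\enter},c\,q_{i\exit})=c\,L(p_{\alpha},q_{i\enter},q_{i\exit})$. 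I would close by noting that this degree-one homogeneity is precisely what justifies Eqs.~(\ref{eq:estimatormapeq}) and~(\ref{eq:mapeqGrassberger}), where the prefactor $1/\sum_{\alpha}u_{\alpha}$ (respectively $1/\sum_{\alpha}k_{\alpha}$) absorbs the common scale, so that the estimator and the minimizing partition are independent of whether normalized or unnormalized transition rates are used.
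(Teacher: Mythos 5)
Your proof is correct and follows essentially the same route as the paper's: scale every transition rate by a positive constant, expand each $x\log_2(cx)$ term via $\log_2(cx)=\log_2 c+\log_2 x$, and observe that the $c\log_2 c$ contributions cancel in the signed sum because $\sum_{i}\sum_{\alpha\in i}p_{\alpha}=\sum_{\alpha}p_{\alpha}$ and the index-codebook terms cancel pairwise, leaving $c\,L(\mathsf{M})$. The paper performs exactly this cancellation term by term, so no further comparison is needed.
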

 
\begin{proof}
If we scale the transition rates $p_{\alpha}, q_{i\curvearrowleft}$ and $q_{i\curvearrowright}$ by a constant $K$, where $K >0$, and change $L(\mathsf{M})$ to 
\begin{align*}
\begin{split}
 &L'_{\mathsf{M}} = -\sum_{\alpha=1}^{V}Kp_{\alpha}\log_2(Kp_{\alpha})-\sum_{i=1}^{m}Kq_{i\curvearrowright}\log_2(Kq_{i\curvearrowright})\\
       & +\sum_{i=1}^{m}\left(Kq_{i\curvearrowright}+\sum_{\alpha \in i} Kp_{\alpha}\right)\log_2\left(Kq_{i\curvearrowright}+\sum_{\alpha \in i} Kp_{\alpha}\right)\\
       & -\sum_{i=1}^{m}Kq_{i\curvearrowleft}\log_2(Kq_{i\curvearrowleft}) + \left(\sum_{i=1}^{m}Kq_{i\curvearrowleft}\right)\log_2\left(\sum_{i=1}^{m}Kq_{i\curvearrowleft}\right),
\end{split}
\end{align*}  
then
\begin{align*}
\begin{split}
&L'_{\mathsf{M}} = -{\cancel{K\sum_{\alpha=1}^{V}p_{\alpha}\log_2(K)}} 
        -K\sum_{\alpha=1}^{V}p_{\alpha}\log_2(p_{\alpha}) \\
      & -{\cancel{K\sum_{i=1}^{m}q_{i\curvearrowright}\log_2(K)}}-K\sum_{i=1}^{m}q_{i\curvearrowright}\log_2(q_{i\curvearrowright})\\
      & -{\cancel{K\sum_{i=1}^{m}q_{i\curvearrowleft}\log_2(K)}}
       -K\sum_{i=1}^{m}q_{i\curvearrowleft}\log_2(q_{i\curvearrowleft})\\
      &+{\cancel{K\sum_{i=1}^{m}q_{i\curvearrowright}\log_2(K)}} + {\cancel{K\sum_{\alpha=1}^{V}p_{\alpha}\log_2(K)}}\\
      &+K\sum_{i=1}^{m}\left(q_{i\curvearrowright}+\sum_{\alpha \in i} p_{\alpha}\right)\log_2\left(q_{i\curvearrowright}
       +\sum_{\alpha \in i} p_{\alpha}\right) \\
      &+ {\cancel{K\left(\sum_{i=1}^{m}q_{i\curvearrowleft}\right)\log_2(K)}}
       + K\left(\sum_{i=1}^{m}q_{i\curvearrowleft}\right)\log_2\left(\sum_{i=1}^{m}q_{i\curvearrowleft}\right)\\
    =& KL(\mathsf{M})    
\end{split} 
\end{align*}
\end{proof}

If we choose 
\begin{align}
K = \frac{\sum_{\alpha=1}^{V}k_{\alpha}}{\sum_{\alpha=1}^{V}k_{\alpha} + \sum_{i=1}^{m}k_{i\curvearrowleft} + \sum_{i=1}^{m}k_{i\curvearrowright}} 
\end{align}
such that
\begin{align}
 p'_{\alpha} = K p_{\alpha}=\frac{k_{\alpha}}{\sum_{\alpha=1}^{V}k_{\alpha} + \sum_{i=1}^{m}k_{i\curvearrowleft} + \sum_{i=1}^{m}k_{i\curvearrowright}}
\end{align}
\begin{align}
 q'_{i\curvearrowleft} = K q_{i\curvearrowleft} = \frac{k_{i\curvearrowleft}}{\sum_{\alpha=1}^{V}k_{\alpha} + \sum_{i=1}^{m}k_{i\curvearrowleft} + \sum_{i=1}^{m}k_{i\curvearrowright}}
\end{align}
\begin{align}
 q'_{i\curvearrowright} = K q_{i\curvearrowright} = \frac{k_{i\curvearrowright}}{\sum_{\alpha=1}^{V}k_{\alpha} + \sum_{i=1}^{m}k_{i\curvearrowleft} + \sum_{i=1}^{m}k_{i\curvearrowright}}
\end{align}
we will have
\begin{align}
 \sum_{\alpha=1}^{V}p'_{\alpha} + \sum_{i=1}^{m}q'_{i\curvearrowleft} + \sum_{i=1}^{m}q'_{i\curvearrowright} = 1.
\end{align}

Now we can use
\begin{align}
\begin{split}
 L&(\mathsf{M}) =\frac{1}{K}[ -\sum_{\alpha=1}^{V}p'_{\alpha}\log_2(p'_{\alpha})-\sum_{i=1}^{m}q'_{i\curvearrowright}\log_2(q'_{i\curvearrowright})\\
 &+\sum_{i=1}^{m}\left(q'_{i\curvearrowright}+\sum_{\alpha \in i} p'_{\alpha}\right)\log_2\left(q'_{i\curvearrowright}+\sum_{\alpha \in i} p'_{\alpha}\right)\\ 
 &-\sum_{i=1}^{m}q'_{i\curvearrowleft}\log_2(q'_{i\curvearrowleft})
        + \left(\sum_{i=1}^{m}q'_{i\curvearrowleft}\right)\log_2\left(\sum_{i=1}^{m}q'_{i\curvearrowleft}\right)]
\end{split}
\end{align}
to calculate the posterior average of the map equation
\begin{align}
\begin{split}
 \hat{L}_B(\mathsf{M}) =& E[L(\mathsf{M})|\pmb{k},\pmb{a}] \\
                =& \int L(\mathsf{M}) P(\pmb{p}',\pmb{q}'_{\curvearrowleft},\pmb{q}'_{\curvearrowright}|\pmb{k},\pmb{a})d\pmb{p}'d\pmb{q}'_{\curvearrowleft}d\pmb{q}'_{\curvearrowright}
\end{split}                
\end{align}
where posterior probability distribution equals
\begin{align}
\begin{split}
 &P(\pmb{p}',\pmb{q}'_{\curvearrowleft},\pmb{q}'_{\curvearrowright}|\pmb{k},\pmb{a}) \propto \\
 &\prod_{\alpha=1}^{V} (p'_{\alpha})^{k_{\alpha}+a_{\alpha}-1} \prod_{i=1}^{m} \left[(q'_{i\curvearrowright})^{k_{i\curvearrowright}+a_{i\curvearrowright}-1} (q'_{i\curvearrowleft})^{k_{i\curvearrowleft}+a_{i\curvearrowleft}-1}\right].
\end{split} 
\end{align}
As a result we obtain
\begin{align}
 \begin{split}
  \hat{L}_B&(\mathsf{M})= \frac{1}{\ln(2)} \frac{1}{\sum_{\alpha=1}^{V}u_{\alpha}} \\
            &\times [ -\sum_{\alpha=1}^{V}u_{\alpha}\psi(u_{\alpha}+1) - \sum_{i=1}^{m}u_{i\curvearrowright}\psi(u_{i\curvearrowright}+1) \\
            & + \sum_{i=1}^{m} ( u_{i\curvearrowright} + \sum_{\alpha \in i} u_{\alpha} ) \psi ( u_{i\curvearrowright} + \sum_{\alpha \in i} u_{\alpha} + 1 )\\
            &- \sum_{i=1}^{m}u_{i\curvearrowleft}\psi(u_{i\curvearrowleft}+1)                 
              + (\sum_{i=1}^{m}u_{i\curvearrowleft})\psi(\sum_{i=1}^{m}u_{i\curvearrowleft} + 1) ]
 \end{split}
\end{align}
where $u_x=k_x+a_x$ and $\psi$ is digamma function, $\psi(x)=\frac{d}{dx}\ln(\Gamma(x))$.

\section{The Bayesian estimate of the multilevel map equation}\label{appendixB}

The multilevel formulation of the map equation~\cite{rosvall2011multilevelmapeq, edler2017mapequation} measures the minimum average description length given a multilevel map $\mathsf{M}$ of $V$ nodes clustered into $m$ communities, for which each community $i$ has a submap $\mathsf{M}_i$ with $m_i$ subcommunities, for which each subcommunity $ij$ has a submap $\mathsf{M}_{ij}$ with $m_{ij}$ subcommunities, and~so on. It uses hierarchically nested code structures,
\begin{align}
L(\mathsf{M}) = q_{\curvearrowleft} H(\mathcal{Q}) + \sum_{i=1}^{m}L(\mathsf{M}_i), \label{eq:requirsivemapeq}
\end{align}
where the average per-step code length needed to describe random walker transitions between communities at the coarsest level is the same as in the case of two-level clusterings,
\begin{align}
H(\mathcal{Q}) = -\sum_{i=1}^{m}\frac{q_{i\curvearrowleft}}{q_{\curvearrowleft}} \log_2{\frac{q_{i\curvearrowleft}}{q_{\curvearrowleft}}},
\end{align}
and the average per-step code word length of the module codebook $i$ recursively takes into account contributions of the description lengths of communities at finer levels,
\begin{align}
L(\mathsf{M}_i) = q^{\circlearrowright}_i H(\mathcal{Q}_i) + \sum_{j=1}^{m_i}L(\mathsf{M}_{ij}). \label{eq:Lintermediate}
\end{align}
Here, the average per-step code length needed to describe the random walker at intermediate level $i$ exiting to a coarser level or entering the $m_{i}$ subcommunities $\mathsf{M}_{ij}$ at a finer level is
\begin{align}
H(\mathcal{Q}_{i}) = -\frac{q_{i\curvearrowright}}{q^{\circlearrowright}_{i}} \log_2{\frac{q_{i\curvearrowright}}{q^{\circlearrowright}_{i}}} -\sum_{j=1}^{m_i}\frac{q_{ij\curvearrowleft}}{q^{\circlearrowright}_{i}} \log_2{\frac{q_{ij\curvearrowleft}}{q^{\circlearrowright}_{i}}},
\end{align}
where
\begin{align}
q^{\circlearrowright}_{i} = q_{i\curvearrowright} + \sum_{j=1}^{m_i}q_{ij\curvearrowleft}
\end{align} 
is the total code rate use in subcommunity $i$.
We add the description lengths of codebooks for subcommunities at finer levels in a recursive fashion down to the finest level,
\begin{equation}
L(\mathsf{M}_{ij\ldots l}) = p^{\circlearrowright}_{ij\ldots l}H(\mathcal{P}_{ij\ldots l}),\label{eq:Lfinest}
\end{equation}
where
\begin{align}
\begin{split}
H(\mathcal{P}_{ij\ldots l}) = &-\frac{q_{ij\ldots l\curvearrowright}}{p^{\circlearrowright}_{ij\ldots l}}\log_2{\frac{q_{ij\ldots l\curvearrowright}}{p^{\circlearrowright}_{ij\ldots l}}} \\
                            &- \sum_{\alpha \in \mathsf{M}_{ij\ldots l}}\frac{\pi_\alpha}{p^{\circlearrowright}_{ij\ldots l}}\log_2{\frac{\pi_\alpha}{p^{\circlearrowright}_{ij\ldots l}}}\label{eq:Hijl}
\end{split}
\end{align}
and
\begin{equation}
p^{\circlearrowright}_{ij\ldots l} = q_{ij\ldots l\curvearrowright} + \sum_{\alpha \in \mathsf{M}_{ij\ldots l}}\pi_\alpha \label{eq:usemno}
\end{equation}
is the total code word use rate of module codebook $ij\ldots l$.

To obtain the Bayesian estimate of the multilevel map equation, we use Eq.~(\ref{eq:requirsivemapeq}) to calculate the posterior average according to Eq.~(\ref{eq:estimatorBayes}). Following the same procedure described in Sec.~\ref{sec:bayesmapeq}, we obtain a formula for the Bayesian estimate of the multilevel map equation,
\begin{align}
\begin{split}
&\hat{L}_B(\mathsf{M})= \frac{1}{\ln(2)}\frac{1}{\sum_{\alpha=1}^{V}u_{\alpha}}\\
&\times[-\sum_{i=1}^{m}u_{i\curvearrowleft}\psi(u_{i\curvearrowleft}+1)+(\sum_{i=1}^{m}u_{i\curvearrowleft})\psi(\sum_{i=1}^{m}u_{i\curvearrowleft}+1)]\\
&+ \sum_{i=1}^{m}\hat{L}_B(\mathsf{M}_i),
\end{split}    
\end{align}
where
\begin{align}
\begin{split}
&\hat{L}_B(\mathsf{M}_i)= \frac{1}{\ln(2)}\frac{1}{\sum_{\alpha=1}^{V}u_{\alpha}}\\
&\times[-u_{i\curvearrowright}\psi(u_{i\curvearrowright}+1)-\sum_{j=1}^{m_i}u_{ij\curvearrowleft}\psi(u_{ij\curvearrowleft}+1)\\
&+(u_{i\curvearrowright}+\sum_{j=1}^{m_i}u_{ij\curvearrowleft})\psi(u_{i\curvearrowright}+\sum_{j=1}^{m_i}u_{ij\curvearrowleft}+1)]\\
&+ \sum_{j=1}^{m_i}\hat{L}_B(\mathsf{M}_{ij})
\end{split}    
\end{align}
and at the finest level
\begin{align}
\begin{split}
&\hat{L}_B(\mathsf{M}_{ij\dots l})=\frac{1}{\ln(2)}\frac{1}{\sum_{\alpha=1}^{V}u_{\alpha}}\\
&\times[-u_{ij\dots l \curvearrowright}\psi(u_{ij\dots l\curvearrowright} + 1) 
- \sum_{\alpha \in \mathsf{M}_{ij\dots l}}u_{\alpha}\psi(u_{\alpha}+1)\\
& + (u_{ij\dots l\curvearrowright}+\sum_{\alpha \in \mathsf{M}_{ij\dots l}}u_{\alpha})\psi(u_{ij\dots l\curvearrowright}+\sum_{\alpha \in \mathsf{M}_{ij\dots l}}u_{\alpha} + 1)].
\end{split}    
\end{align}

\section{Results for different values of the prior parameter}\label{appendixC}

\begin{figure}[ht]
 \includegraphics[width=8.6cm]{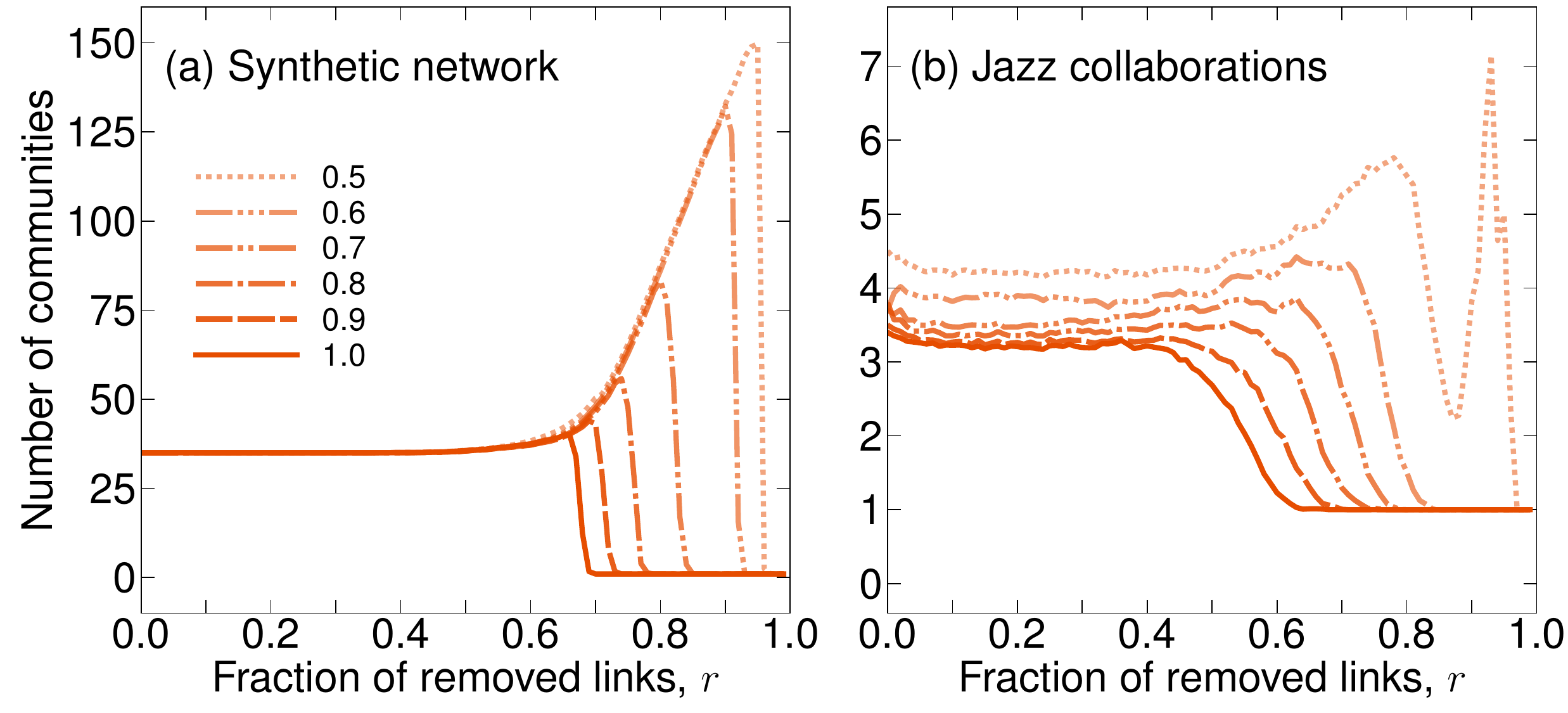}
 \caption{Mean number of communities obtained by the Bayesian estimate of the map equation with different values of the prior constant $C$. Smaller prior constants give more communities when many links are missing. Results are averaged over 100 network samplings and ten algorithm searches.}
 \label{fig:nummod_priors}
\end{figure}

The number of communities obtained by the Bayesian estimate of the map equation varies for different values of the prior constant $C$ between 0.5 and 1 (Fig.~\ref{fig:nummod_priors}). For the synthetic network in the undersampled regime, $C<0.8$ can lead to severe overfitting before removing so many links that it becomes evident that there is no significant community structure. For the Jazz collaboration network, the number of detected communities is similar for prior constant $C>0.6$ but is higher for all values of $r$ when $C \leq 0.6$.

\begin{figure}[ht]
 \includegraphics[width=8.6cm]{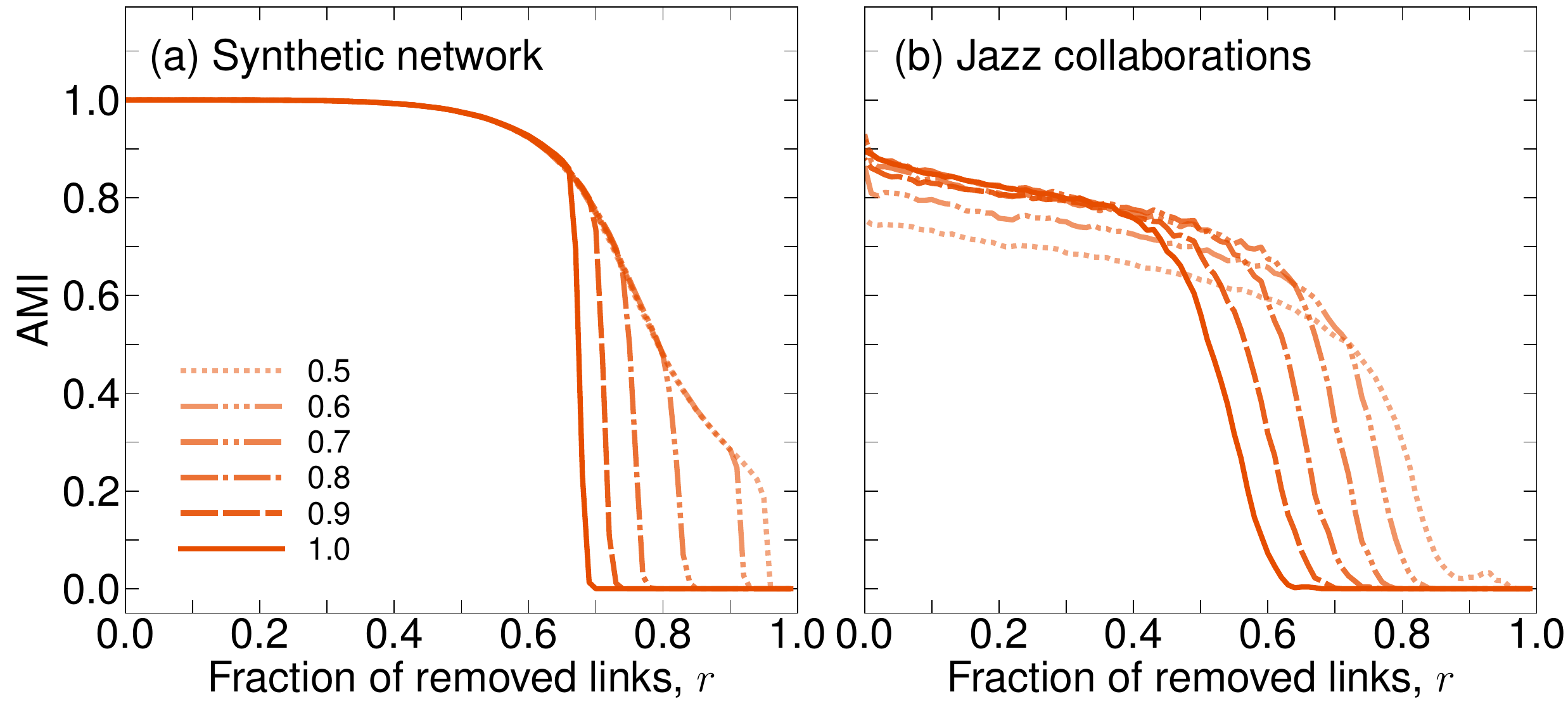}
 \caption{Performance tests of the Bayesian estimate of the map equation with different values of the prior constant $C$ using AMI. (a) AMI scores with the planted partition of the synthetic network as reference. (b) AMI scores with a partition obtained for the complete Jazz collaboration network as reference. Smaller prior constants give communities with non-zero AMI scores when many links are missing at the cost of overall lower AMI-scores in the Jazz network. Results are averaged over 100 network samplings and ten algorithm searches.}
 \label{fig:ami_priors}
\end{figure}

To compare the performance for different prior parameters, we also compute the AMI for $C$ between 0.5 and 1 (Fig.~\ref{fig:ami_priors}). For the synthetic network, the AMI results confirm that the detected communities become sensitive to the choice of prior when we remove more than $65\%$ of the links. For example, for $C \geq 0.8$, the detected communities have AMI down to $0.65$ before dropping to 0. For $C<0.8$, the method can detect communities in sparser networks but these communities have AMI scores below 0.5. For the Jazz collaboration network, the AMI results confirm that the detected communities are more robust when $C > 0.6$.

\begin{figure}[ht]
 \includegraphics[width=8.6cm]{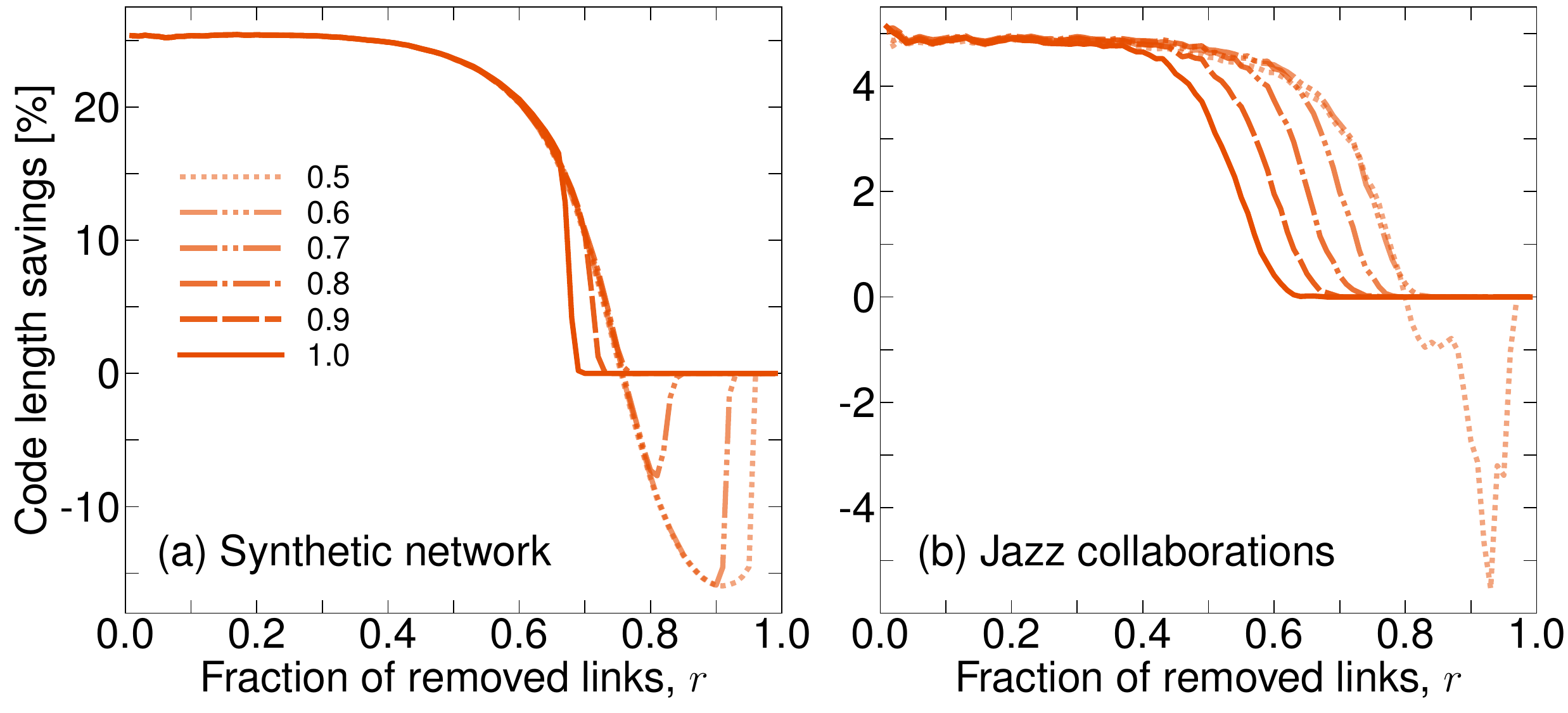}
 \caption{Performance tests of the Bayesian estimate of the map equation with different values of the prior constant $C$ using cross-validation.  Smaller prior constants give higher compression in a narrow range of missing links at the cost of lower compression for more missing links. We show relative code length savings for the test network compared to the one-community partition. The code length is measured with Grassberger entropy estimation. Results are averaged over 100 network samplings and ten algorithm searches.}
 \label{fig:crossvalidation_priors}
\end{figure}

Cross-validation further confirms these results for different prior parameters. For the synthetic network, the Bayesian estimate of the map equation is more robust to overfitting with prior constant $C \geq 0.8$ (Fig.~\ref{fig:crossvalidation_priors}). With $C<0.8$ and more than $75\%$ of the links removed, the communities detected in the training network applied to the test network give worse compression than with a single community. For the Jazz collaboration network, a prior with $C \geq 0.6$ prevents the detection of communities in the training network that, when applied to the test network, give negative relative code length savings.

These results for different values of the prior parameter indicate that there is no single prior $C\ln(V)$ that achieves optimal performance for all networks. We suggest using $\ln(V)$ as a prior because it is robust to overfitting and has good overall performance. If desired for specific networks, $C$ can be optimized between $0.5$ and 1 with cross-validation.

\newpage

\end{document}